\newtheorem{theorem}{Theorem}
\newtheorem{acknowledgement}[theorem]{Acknowledgement}
\newtheorem{definition}[theorem]{Definition}
\newtheorem{lemma}[theorem]{Lemma}
\newenvironment{proof}[1][Proof]{\noindent\textbf{#1.} }{\ \rule{0.5em}{0.5em}}
\begin{document}

\title{Derivatives of Complete Weight Enumerators and New Balance Principle
of Binary Self-Dual Codes}
\author{Vassil Yorgov \\
Department of Mathematics and Computer Science\\
Fayetteville State University\\
1200 Murchison Rd\\
Fayetteville, NC 28311}
\maketitle

\begin{abstract}
Let $K=\frac{1}{\sqrt{2}}%
\begin{bmatrix}
1 & 1 \\ 
1 & -1%
\end{bmatrix}%
.$ It is known that the complete weight enumerator $\ W$ of a binary
self-dual code of length $n$ is an eigenvector corresponding to an
eigenvalue 1 of the Kronecker power $K^{[n]}.$ For every integer $t,~0\leq
t\leq n,$ we define the t-th derivative $W_{<t>}$ of $W$ in such a way that $%
W_{<t>}$ is in the eigenspace of $\ 1$ of the matrix $K^{[n-t]}.$ For large
values of $t,$ $W_{<t>}$ contains less information about the code but has
smaller length while $W_{<0>}=W$ completely determines the code. We compute
the derivative of order $n-5$ for the extended Golay code of length 24, the
extended quadratic residue code of length 48, and the putative [72,24,12]
code and show that they are in the eigenspace of $\ 1$ of the matrix $%
K^{[5]}.$ We use the derivatives to prove a new balance equation which
involves the number of code vectors of given weight having 1 in a selected
coordinate position. As an example, we use the balance equation to eliminate
some candidates for weight enumerators of binary self-dual codes of length
eight.
\end{abstract}

\section{Introduction}

In this work, we use the standard notation of error correcting codes. An
appropriate reference is the book \cite{Huff-Pless}. Let $C$ be a binary
linear self-dual code of length $n.$ Thus $C$ is a linear subspace of
dimension $\ n/2$ of the vector space $F^{n},$ where $F$ is the binary
field, and $C$ is equal to its orthogonal code with respect to the usual dot
product. The exact weight enumerator of $C$ is a row vector $W$ of length $%
2^{n}$ with integer entries labeled with the vectors from $F^{n}$ listed in
lexicographic order. The entry $W[v]$ labeled with a vector $v$ is defined
with

\begin{equation*}
W[v]=\left\{ 
\begin{tabular}{cc}
$1$ & if $v\in C$ \\ 
$0$ & if $v\notin C.$%
\end{tabular}%
\right.
\end{equation*}%
Thus $W$ belongs to the vector space $Q^{2^{n}}$ where $Q$ is the rational
field. Clearly the exact weight enumerator $W$ determines completely the
code $C.$

Let $K=\frac{1}{\sqrt{2}}%
\begin{bmatrix}
1 & 1 \\ 
1 & -1%
\end{bmatrix}%
,$ $K^{[2]}=K\otimes K,$ $K^{[3]}=K^{[2]}\otimes K,$ and let $%
K^{[n]}=K^{[n-1]}\otimes K,$ be the $n$-th Kronecker power of $K.$ A\
summery of the properties of Kronecker products of matrices is given in \cite%
{Zhang}. Let $R$ be the real number field. The matrix $K$ has eigenvalues $1$
and $-1.$ The vectors $\left( 1,\sqrt{2}-1\right) $ and $\left( 1,-\sqrt{2}%
-1\right) $ are eigenvectors of $K$ corresponding to $1$ and $-1.$Therefore, 
$K$ is simple, meaning that there exist a basis of $R^{2}$ consisting of
eigenvectors of $K.$ The rows of the matrix \ 
\begin{equation*}
B=%
\begin{bmatrix}
1 & \sqrt{2}-1 \\ 
1 & -\sqrt{2}-1%
\end{bmatrix}%
\end{equation*}%
form such basis of $R^{2}.$ It follows that the matrix $K^{[n]}$ is also
simple and has eigenvalues $1$ and $-1$ , \cite{Lan} page 413. The rows of
the Kronecker power $B^{[n]}$ form a basis of $R^{2^{n}}.$ We label the rows
of $B^{[n]}$ with the vectors from $F^{n}$ listed in lexicographic order.

\begin{lemma}
The rows of $B^{[n]}$ with even weight labels form a basis of the eigenspace
of $1$ of $K^{[n]}.$
\end{lemma}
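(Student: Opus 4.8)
The plan is to exploit the fundamental behaviour of Kronecker products on eigenvectors. Recall the identity $(A_1\otimes\cdots\otimes A_n)(x_1\otimes\cdots\otimes x_n)=(A_1x_1)\otimes\cdots\otimes(A_nx_n)$. Writing $b_0=(1,\sqrt2-1)$ and $b_1=(1,-\sqrt2-1)$ for the two rows of $B$, one checks $Kb_0^{\mathsf T}=b_0^{\mathsf T}$ and $Kb_1^{\mathsf T}=-b_1^{\mathsf T}$, so $b_0,b_1$ are eigenvectors of the symmetric matrix $K$ for the eigenvalues $\lambda_0=1$ and $\lambda_1=-1$. First I would record the structural fact that, under the lexicographic labeling, the row of $B^{[n]}$ labeled by $v=(v_1,\dots,v_n)\in F^n$ is exactly the Kronecker product $b_{v_1}\otimes\cdots\otimes b_{v_n}$; this follows by induction on $n$ from $B^{[n]}=B\otimes B^{[n-1]}$ together with the standard description of the rows of a Kronecker product of matrices as Kronecker products of the corresponding rows.

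Next I would combine these two observations. By the eigenvector identity, the row labeled $v$ satisfies $K^{[n]}(b_{v_1}\otimes\cdots\otimes b_{v_n})^{\mathsf T}=\big(\prod_{j=1}^n\lambda_{v_j}\big)(b_{v_1}\otimes\cdots\otimes b_{v_n})^{\mathsf T}$, so it is an eigenvector of $K^{[n]}$ with eigenvalue $\prod_{j=1}^n\lambda_{v_j}=(-1)^{\mathrm{wt}(v)}$. Hence a row of $B^{[n]}$ lies in the eigenspace $E_1$ of the eigenvalue $1$ precisely when $\mathrm{wt}(v)$ is even, and in the eigenspace of $-1$ when $\mathrm{wt}(v)$ is odd. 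This already shows every even-weight row belongs to $E_1$, and these rows are linearly independent because they form a subset of the basis of $R^{2^n}$ given by all rows of $B^{[n]}$.

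It remains to verify that the even-weight rows actually \emph{span} $E_1$, which is the one step needing care, since membership in $E_1$ is not by itself enough. For this I would take an arbitrary $w\in E_1$ and expand it in the full basis of rows, $w=\sum_{v\in F^n}c_v\,r_v$ where $r_v$ denotes the row labeled $v$. Applying $K^{[n]}$ and using $w=K^{[n]}w$ gives $\sum_v c_v r_v=\sum_v c_v(-1)^{\mathrm{wt}(v)}r_v$, and comparing coefficients in the basis $\{r_v\}$ forces $c_v=(-1)^{\mathrm{wt}(v)}c_v$, hence $c_v=0$ for every $v$ of odd weight. Thus $w$ is a linear combination of even-weight rows, so these rows span $E_1$; being linearly independent, they form a basis of $E_1$. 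I expect the main obstacle to be mostly bookkeeping, namely pinning down precisely that the lex-ordered row of $B^{[n]}$ is the claimed tensor product so that the parity of the weight matches the product of eigenvalues, while the spanning argument via coefficient comparison is the only genuinely nontrivial logical point.
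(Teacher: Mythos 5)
Your proof is correct and follows essentially the same route as the paper: both arguments rest on the fact that the rows of $B$ diagonalize $K$ (the paper packages this as $BK=DB$, hence $B^{[n]}K^{[n]}=D^{[n]}B^{[n]}$, while you work row by row via the mixed-product property), so that the row labeled $v$ is an eigenvector of $K^{[n]}$ with eigenvalue $(-1)^{\mathrm{wt}(v)}$. The only difference is that you spell out the spanning step explicitly by comparing coefficients in the row basis, a detail the paper leaves implicit in ``the lemma follows from the last equality.''
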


\begin{proof}
We have $BK=DB$ where 
\begin{equation*}
D=%
\begin{bmatrix}
1 & 0 \\ 
0 & -1%
\end{bmatrix}%
.
\end{equation*}%
Using the properties of Kronecker power, we obtain $\left( BK\right)
^{[n]}=\left( DB\right) ^{[n]}$ and $B^{[n]}K^{[n]}=D^{[n]}B^{[n]}.$ Since
the diagonal matrix $D^{[n]}$ has $1$ on the rows labeled with even weight
vectors and $-1$ on the rows labeled with odd weight vectors, the lemma
follows from the last equality.
\end{proof}

The next lemma is a particular case of a result proved in \cite{Sim}.

\begin{lemma}
\cite{Sim}\label{Lsim} The complete weight enumerator $W$ of the binary
self-dual code $C$ belongs to the eigenspace of $1$ of the matrix $K^{[n]}$.
Moreover, any nontrivial (0,1) vector from the eigenspace of $1$ of the
matrix $K^{[n]}$ determines a self-dual binary code.
\end{lemma}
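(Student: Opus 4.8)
The plan is to identify $K^{[n]}$ with the normalized Walsh--Hadamard transform and to prove both halves by a direct character-sum computation. Writing the row and column indices as vectors $u,v\in F^{n}$, the $(u,v)$ entry of $K^{[n]}$ is the product of the corresponding entries of $K$, namely $2^{-n/2}(-1)^{u\cdot v}$, where $u\cdot v$ is the dot product. Thus for any vector $W$ indexed by $F^{n}$,
\begin{equation*}
(K^{[n]}W)[u]=2^{-n/2}\sum_{v\in F^{n}}(-1)^{u\cdot v}W[v].
\end{equation*}
Because $K$ is symmetric, so is $K^{[n]}$, and the left and right eigenspaces of $1$ coincide; I will therefore work with the action above throughout. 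The first assertion is then the complete-weight-enumerator form of the MacWilliams identity specialized to the self-dual case.

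For the first assertion, let $W$ be the complete weight enumerator of the self-dual code $C$, so $W[v]=1$ exactly when $v\in C$. Then $(K^{[n]}W)[u]=2^{-n/2}\sum_{v\in C}(-1)^{u\cdot v}$. Since $C$ is a subgroup of $F^{n}$, the inner character sum equals $|C|$ when $u\cdot v=0$ for all $v\in C$, i.e. when $u\in C^{\perp}$, and equals $0$ otherwise. Self-duality gives $C^{\perp}=C$ and $|C|=2^{n/2}$, so $(K^{[n]}W)[u]=2^{-n/2}\cdot 2^{n/2}$ when $u\in C$ and $0$ otherwise; that is, $(K^{[n]}W)[u]=W[u]$ for all $u$, and $W$ lies in the eigenspace of $1$.

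For the converse, suppose $W$ is a nontrivial $(0,1)$ eigenvector of $K^{[n]}$ for the eigenvalue $1$, and let $S=\{v\in F^{n}:W[v]=1\}$, a nonempty set with $W$ equal to the indicator of $S$. Setting $\widehat W[u]=\sum_{v}(-1)^{u\cdot v}W[v]$, the eigenvector equation reads $\widehat W=2^{n/2}W$; evaluating at $u=0$ gives $|S|=2^{n/2}W[0]$, and since $S$ is nonempty this forces $W[0]=1$ and $|S|=2^{n/2}$. The key step is to turn this pointwise spectral condition into a closure property of $S$. I would do this with the convolution identity: since the transform carries convolution to pointwise multiplication, $\widehat{W*W}=\widehat W^{\,2}=2^{n}W^{2}=2^{n}W$ (as $W$ is $(0,1)$), and Fourier inversion then yields $(W*W)[x]=2^{n/2}W[x]$. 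Reading off the two cases, $(W*W)[x]=\bigl|\{\,y\in S:x+y\in S\,\}\bigr|$ equals $|S|$ when $x\in S$ and $0$ otherwise. The first case forces $x+y\in S$ for all $x,y\in S$, so $S$ is closed under addition; being nonempty over $F$ this makes $S$ a linear subspace, necessarily of dimension $n/2$ since $|S|=2^{n/2}$. Finally, for a subspace $S$ one has the standard evaluation $\widehat W=2^{n/2}\mathbf 1_{S^{\perp}}$, and comparing with $\widehat W=2^{n/2}\mathbf 1_{S}$ gives $S^{\perp}=S$. Thus $S$ is a self-dual binary code.

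The main obstacle is the converse, and specifically the passage from ``$W$ is a $(0,1)$ eigenvector'' to ``$S$ is linear.'' A single pointwise identity is not obviously enough, and the convolution/autocorrelation identity $W*W=2^{n/2}W$ is what makes it work: it says precisely that each element of $S$ translates $S$ onto itself, which is exactly closure under addition. I expect the routine parts --- the character-sum evaluation and the subspace transform formula --- to be straightforward, so the write-up should concentrate on justifying the convolution theorem and Fourier inversion over $F^{n}$ and on the counting argument that extracts closure from $(W*W)[x]=|S|$.
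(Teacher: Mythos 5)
Your proof is correct. Note, however, that the paper itself does not prove this lemma at all: it is stated as a particular case of a result of Simonis and justified only by the citation \cite{Sim}, so there is no in-paper argument to compare against. What you have written is a sound, self-contained replacement. The forward direction is exactly the character-sum form of the MacWilliams identity you describe: $K^{[n]}$ has $(u,v)$ entry $2^{-n/2}(-1)^{u\cdot v}$, the sum $\sum_{v\in C}(-1)^{u\cdot v}$ collapses to $|C|\,\mathbf 1_{C^{\perp}}(u)$, and self-duality finishes it. For the converse, your identification of the real content --- extracting linearity of the support $S$ from a pointwise spectral condition --- is the right one, and the autocorrelation identity $W*W=2^{n/2}W$ does deliver it: evaluating at $u=0$ forces $0\in S$ and $|S|=2^{n/2}$, the case $x\in S$ of $(W*W)[x]=|S|$ gives closure under addition (which over $F$ already makes $S$ a subspace), and comparing $\widehat W=2^{n/2}\mathbf 1_{S}$ with the standard evaluation $\widehat{\mathbf 1_{S}}=|S|\,\mathbf 1_{S^{\perp}}$ yields $S^{\perp}=S$. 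One small point worth making explicit in a final write-up: since $K^{[n]}$ is symmetric, the row-eigenvector convention $WK^{[n]}=W$ used elsewhere in the paper agrees with the column action you compute with, so no translation is needed.
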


Since $2^{n}$ increases very rapidly with $n,$ this lemma is not practical
when the code length $n$ increases. In the next section we define
derivatives of $W$ which contain less information about the code $C$ but
have smaller size.

\section{Derivatives of the Complete Weight Enumerator}

Let $\rho =\sqrt{2}-1$ and $\mu =-\sqrt{2}-1.$ The quadratic extension $%
Q(\rho )$ has automorphism group of order two generated by the automorphism $%
\Phi :Q(\rho )\rightarrow Q(\rho )$ defined with $\Phi (\rho )=\mu .$

\begin{definition}
\label{defDer}For every integer $t,$ $0\leq t\leq n,$ the $t$-th derivative
of $W$ is the row vector $W_{<t>}$ of length $2^{n-t}$ having an entry with
label $v\in F^{n-t}$ determined by 
\begin{equation*}
W_{<t>}[v]=\sum_{u\in F^{t}}\rho ^{wt(u)}W[uv]
\end{equation*}%
where $wt(u)$ is the Hemming weight of $u$ and $uv$ is the concatenation of $%
u$ and $v.$
\end{definition}

For example, if $u=(1,1,0,1)$ and $v=(1,0),$ then $uv=(1,1,0,1,1,0).$ It
follows that every derivative $W_{<t>}$ has nonnegative entries from the
quadratic extension $Q(\rho ).$ Particularly, $W_{<n>}$ is determined by the
weight distribution of $C.$

\begin{lemma}
\begin{equation*}
W_{<n>}=\sum_{u\in F^{n}}\rho ^{wt(u)}W[u]=\sum_{k=0}^{n}A_{k}\rho ^{k}
\end{equation*}
\end{lemma}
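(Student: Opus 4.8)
The plan is to apply Definition~\ref{defDer} with $t=n$, so that the label $v$ ranges over the single vector in $F^{0}$ (the empty concatenation), and the sum over $u\in F^{n}$ is the entire index set. This immediately gives
\begin{equation*}
W_{<n>}=\sum_{u\in F^{n}}\rho^{wt(u)}W[u],
\end{equation*}
which is the first claimed equality; it is essentially just the definition read off in the extreme case $t=n$, so no real work is required there.

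For the second equality I would use the definition of the exact weight enumerator $W$ from the introduction: $W[u]=1$ precisely when $u\in C$ and $W[u]=0$ otherwise. Hence only the codewords contribute to the sum, and we may rewrite it as $\sum_{u\in C}\rho^{wt(u)}$. The key step is then to group the codewords by their Hamming weight: collecting all $u\in C$ with $wt(u)=k$ contributes $A_{k}$ terms, each equal to $\rho^{k}$, where $A_{k}$ denotes the number of codewords of weight $k$ (the weight distribution of $C$). Summing over $k=0,\dots,n$ yields
\begin{equation*}
\sum_{u\in C}\rho^{wt(u)}=\sum_{k=0}^{n}A_{k}\rho^{k},
\end{equation*}
which completes the chain of equalities.

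I expect no genuine obstacle here: the result is a direct computation that unwinds the definitions. The only point requiring a moment of care is the boundary case $t=n$, where one must confirm that $F^{n-t}=F^{0}$ consists of a single empty label, so that $W_{<n>}$ is a scalar (a length-$1$ vector) and the outer sum over $v$ degenerates. Once that convention is fixed, the grouping by weight is routine and the statement follows.
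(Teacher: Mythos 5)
Your proof is correct and matches the paper's treatment: the paper states this lemma without proof precisely because it is the immediate unwinding of Definition~\ref{defDer} at $t=n$ (where the label set $F^{0}$ degenerates to the empty vector) combined with the indicator definition of $W$ and grouping codewords by weight. Your handling of the boundary case and the grouping step is exactly the intended argument.
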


where $A_{k}$ is the number of weight $k$ vectors in $C.$

Thus $W_{<n>}\in Q(\rho )$ and contains less information about $C$ than the
weight distribution of $C.$ On the other end, $W_{<0>}=W$ determines the
code $C$ completely.

\begin{lemma}
\label{nextD}For every integer $t,$ $0\leq t\leq n-1$ and for every $v\in
F^{n-t-1},$ we have $W_{<t+1>}[v]=W_{<t>}[(0)v]+\rho W_{<t>}[(1)v].$
\end{lemma}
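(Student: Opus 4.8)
The plan is to start from the right-hand side and expand each term by Definition \ref{defDer}, then reassemble the result into the defining sum for $W_{<t+1>}[v]$. Fix an integer $t$ with $0\leq t\leq n-1$ and fix $v\in F^{n-t-1}$. By definition, applied to the two length-$(n-t)$ arguments $(0)v$ and $(1)v$, I have $W_{<t>}[(0)v]=\sum_{u\in F^{t}}\rho^{wt(u)}W[u(0)v]$ and $W_{<t>}[(1)v]=\sum_{u\in F^{t}}\rho^{wt(u)}W[u(1)v]$, where in each case $u$ occupies the first $t$ coordinates and the single bit $0$ or $1$ occupies coordinate $t+1$.

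Next I would absorb the scalar $\rho$ multiplying the second sum into the weight exponent. Since $wt(u(0))=wt(u)$ and $wt(u(1))=wt(u)+1$, the two contributions to $W_{<t>}[(0)v]+\rho\,W_{<t>}[(1)v]$ become $\sum_{u\in F^{t}}\rho^{wt(u(0))}W[(u(0))v]$ and $\sum_{u\in F^{t}}\rho^{wt(u(1))}W[(u(1))v]$, where I have used the associativity of concatenation, $(ub)v=u(b)v$, to regroup each length-$n$ label.

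The key step is then to recognize the combined sum as a single sum over $F^{t+1}$. The map $(u,b)\mapsto ub$ from $F^{t}\times\{0,1\}$ to $F^{t+1}$ is a bijection, and it is additive on Hamming weight, $wt(ub)=wt(u)+wt(b)$. Letting $w=ub$ range over all of $F^{t+1}$, the two sums merge into $\sum_{w\in F^{t+1}}\rho^{wt(w)}W[wv]$, which is exactly $W_{<t+1>}[v]$ by Definition \ref{defDer}. This establishes the claimed recursion.

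I expect the only real care to be the bookkeeping of coordinate order: the recursion peels off the first coordinate of the argument, which corresponds precisely to the $(t+1)$-th overall coordinate, that is, the last bit of the prefix block being summed. Once this identification is made explicit the computation is routine, so I anticipate no substantial obstacle beyond keeping the concatenations correctly aligned.
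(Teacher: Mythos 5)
Your proposal is correct and is essentially the paper's own proof read in reverse: the paper starts from $W_{<t+1>}[v]$, splits the sum over $F^{t+1}$ according to the last bit of the prefix, and pulls out the factor $\rho$, while you start from the right-hand side and merge the two sums via the same bijection $(u,b)\mapsto ub$. Your remark about the peeled-off bit being the $(t+1)$-th overall coordinate is exactly the bookkeeping the paper's computation relies on, so there is nothing to add.
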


\begin{proof}
The proof is straightforward: 
\begin{eqnarray*}
W_{<t+1>}[v] &=&\sum_{u^{\prime }\in F^{t+1}}\rho ^{wt(u^{\prime
})}W[u^{\prime }v] \\
&=&\sum_{u\in F^{t}}\rho ^{wt(u(0))}W[u(0)v]+\sum_{u\in F^{t}}\rho
^{wt(u(1))}W[u(1)v] \\
&=&\sum_{u\in F^{t}}\rho ^{wt(u)}W[u(0)v]+\rho \sum_{u\in F^{t}}\rho
^{wt(u)}W[u(1)v] \\
&=&W_{<t>}[(0)v]+\rho W_{<t>}[(1)v].
\end{eqnarray*}
\end{proof}

Less formally, $W_{<t+1>}=$ $W_{<t>}[0]+\rho W_{<t>}[1]$ where $W_{<t>}[0]$
is the first half and $W_{<t>}[1]$ is the second half of $W_{<t>}$.

The next lemma shows that every entry of the second half of $W_{<t>}$
depends only on a corresponding entry from the first half of $W_{<t>}.$ 

\begin{lemma}
\label{2half}For every integer $t,$ $0\leq t\leq n-1$ and for every $v\in
F^{n-t-1},$ we have $W_{<t>}[(1)\overline{v}]=(-1)^{wt(v)}\rho ^{t}\Phi
(W_{<t>}[(0)v])$ where $\overline{v}$ is the complementary vector of $v$ and 
$\Phi $ is the automorphism of $Q(\rho )$ defined above. 
\end{lemma}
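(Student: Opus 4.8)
The plan is to unfold the definition of the derivative and then exploit two structural features of a binary self-dual code $C$: first, that $C$ contains the all-ones vector, and second, that every codeword of $C$ has even weight. By Definition~\ref{defDer},
\begin{equation*}
W_{<t>}[(1)\overline{v}]=\sum_{u\in F^{t}}\rho^{wt(u)}W[u(1)\overline{v}],
\end{equation*}
where $u(1)\overline{v}\in F^{n}$. Since $C$ is self-dual, every $c\in C$ satisfies $\mathbf{1}\cdot c=wt(c)\equiv 0\pmod 2$, so the all-ones vector lies in $C^{\perp}=C$; consequently $W$ is invariant under complementation, i.e.\ $W[x]=W[\overline{x}]$ for all $x\in F^{n}$. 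Applying this to $x=u(1)\overline{v}$ and noting $\overline{u(1)\overline{v}}=\overline{u}(0)v$ gives $W[u(1)\overline{v}]=W[\overline{u}(0)v]$.

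Next I would reindex the sum by the bijection $u\mapsto\overline{u}$ of $F^{t}$, using $wt(\overline{u})=t-wt(u)$, to obtain
\begin{equation*}
W_{<t>}[(1)\overline{v}]=\rho^{t}\sum_{u\in F^{t}}\rho^{-wt(u)}W[u(0)v].
\end{equation*}
The algebraic engine is the identity $\rho\mu=(\sqrt{2}-1)(-\sqrt{2}-1)=-1$, so that $\rho^{-1}=-\mu$ and hence $\rho^{-wt(u)}=(-1)^{wt(u)}\mu^{wt(u)}=(-1)^{wt(u)}\Phi(\rho^{wt(u)})$, since $\Phi$ is a field automorphism with $\Phi(\rho)=\mu$ and $\Phi$ fixes the rational entry $W[u(0)v]$.

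The final, and genuinely decisive, step is to remove the sign $(-1)^{wt(u)}$ from inside the sum. Each term is nonzero only when $u(0)v\in C$; for such a codeword the even-weight property forces $wt(u)+wt(v)=wt(u(0)v)$ to be even, so $(-1)^{wt(u)}=(-1)^{wt(v)}$ on exactly the surviving terms. Factoring out the now-constant $(-1)^{wt(v)}$ leaves $\rho^{t}\sum_{u}\Phi(\rho^{wt(u)})W[u(0)v]=\rho^{t}\Phi\!\left(\sum_{u}\rho^{wt(u)}W[u(0)v]\right)=\rho^{t}\Phi(W_{<t>}[(0)v])$, which is the claim. The main obstacle is this last parity-matching: it is tempting to pull the sign out purely algebraically, but it is only legitimate term-by-term, justified by the vanishing of $W$ off $C$ together with the even-weight characterization of self-dual codes. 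Thus the two facts about $C$ must be invoked at the right places—complement invariance at the reindexing step and even weight at the parity step—rather than at the level of the whole sum.
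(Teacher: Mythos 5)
Your proof is correct and follows essentially the same route as the paper's: both use the presence of the all-ones vector to replace $W[u(1)\overline{v}]$ by $W[\overline{u}(0)v]$, reindex by complementation to extract $\rho^{t}$, invoke $\rho^{-1}=-\mu$, and use the even-weight property of codewords to convert $(-1)^{wt(u)}$ into the constant $(-1)^{wt(v)}$ before recognizing the remaining sum as $\Phi(W_{<t>}[(0)v])$. Your write-up is in fact slightly more careful than the paper's at the parity-matching step, but the argument is the same.
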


\begin{proof}
As the all one vector is in the code $C,$ for any $u\in F^{t}$ we have $W[%
\overline{u}(1)\overline{v}]=W[u(0)v]$ where $\overline{u}$ and $\overline{v}
$ are the complementary vectors of $u$ and $v.$  By Definition \ref{defDer}, 
\begin{eqnarray*}
W_{<t>}[(1)\overline{v}] &=&\sum_{\overline{u}\in F^{t}}\rho ^{wt(\overline{u%
})}W[\overline{u}(1)\overline{v}] \\
&=&\sum_{\overline{u}\in F^{t}}\rho ^{wt(\overline{u})}W[u(0)v] \\
&=&\sum_{u\in F^{t}}\rho ^{t-wt(u)}W[u(0)v] \\
&=&\rho ^{t}\sum_{u\in F^{t}}\rho ^{-wt(u)}W[u(0)v]
\end{eqnarray*}%
We check that $\rho ^{-1}=-\mu .$ Because $C$ contains only even weight
vectors, $W[u(0)v]=0$ when $wt(u)$ and $wt(v)$ have different parities.
Therefore%
\begin{eqnarray*}
W_{<t>}[(1)\overline{v}] &=&\rho ^{t}\sum_{u\in F^{t}}(-\mu )^{wt(u)}W[u(0)v]
\\
&=&\rho ^{t}(-1)^{wt(v)}\sum_{u\in F^{t}}\mu ^{wt(u)}W[u(0)v] \\
&=&\rho ^{t}(-1)^{wt(v)}\Phi \left( \sum_{u\in F^{t}}\rho
^{wt(u)}W[u(0)v]\right)  \\
&=&(-1)^{wt(v)}\rho ^{t}\Phi \left( W_{<t>}[(0)v]\right) 
\end{eqnarray*}
\end{proof}

\begin{theorem}
\label{ThEigen}For every integer $t,$ $0\leq t\leq n,$ the $t$-th derivative 
$W_{<t>}$ belongs to the eigenspace of $1$ of $K^{[n-t]}.$
\end{theorem}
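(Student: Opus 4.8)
The natural approach is induction on $t$. The base case $t=0$ is nothing but Lemma \ref{Lsim}, since $W_{<0>}=W$ lies in the eigenspace of $1$ of $K^{[n]}$. For the inductive step I would fix $t$ with $0\le t\le n-1$, assume that $W_{<t>}$ lies in the eigenspace of $1$ of $K^{[n-t]}$, and use Lemma \ref{nextD} to transfer this property to $W_{<t+1>}$ and $K^{[n-t-1]}$.

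The key structural point is to write $K^{[n-t]}$ so that the \emph{first} coordinate is split off, matching the way Lemma \ref{nextD} peels the first coordinate of $W_{<t>}$. Because the Kronecker product is associative, $K^{[n-t]}=K\otimes K^{[n-t-1]}$, which in block form reads
\begin{equation*}
K^{[n-t]}=\frac{1}{\sqrt{2}}
\begin{bmatrix}
K^{[n-t-1]} & K^{[n-t-1]} \\
K^{[n-t-1]} & -K^{[n-t-1]}
\end{bmatrix}.
\end{equation*}
Writing $a$ and $b$ for the first and second halves of $W_{<t>}$, the eigenvector equation $W_{<t>}K^{[n-t]}=W_{<t>}$ becomes $\tfrac{1}{\sqrt{2}}(a+b)K^{[n-t-1]}=a$ and $\tfrac{1}{\sqrt{2}}(a-b)K^{[n-t-1]}=b$; adding and subtracting these yields
\begin{equation*}
aK^{[n-t-1]}=\tfrac{1}{\sqrt{2}}(a+b),\qquad bK^{[n-t-1]}=\tfrac{1}{\sqrt{2}}(a-b).
\end{equation*}

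By Lemma \ref{nextD} we have $W_{<t+1>}=a+\rho b$, so
\begin{equation*}
W_{<t+1>}K^{[n-t-1]}=aK^{[n-t-1]}+\rho\,bK^{[n-t-1]}=\tfrac{1}{\sqrt{2}}\bigl((1+\rho)a+(1-\rho)b\bigr).
\end{equation*}
Here the specific value $\rho=\sqrt{2}-1$ enters: since $1+\rho=\sqrt{2}$ and $1-\rho=\sqrt{2}\,\rho$, the right-hand side collapses to $a+\rho b=W_{<t+1>}$, completing the induction. Conceptually, the $t$-th derivative contracts the first tensor factor against the row vector $(1,\rho)$, and the two numerical identities above are exactly the statement that $(1,\rho)$ is the eigenvector of $K$ for eigenvalue $1$; contraction against a fixed vector of $K$ commutes with the action of the remaining factor $K^{[n-t-1]}$, so it carries $1$-eigenvectors of $K^{[n-t]}$ to $1$-eigenvectors of $K^{[n-t-1]}$.

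I expect the only real obstacle to be bookkeeping rather than depth. One must orient the Kronecker product correctly so that the block decomposition separates the first coordinate (this is precisely where associativity of $\otimes$ is used, to replace the defining grouping $K^{[n-t-1]}\otimes K$ by $K\otimes K^{[n-t-1]}$), and one must keep straight that $W$ is a left eigenvector of the symmetric matrix $K^{[n-t]}$, so that left and right eigenspaces agree. Once the block identities for $aK^{[n-t-1]}$ and $bK^{[n-t-1]}$ are in place, the algebraic facts $1+\rho=\sqrt{2}$ and $1-\rho=\sqrt{2}\,\rho$ make the inductive step immediate.
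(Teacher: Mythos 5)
Your proof is correct and follows essentially the same route as the paper: induction on $t$ with base case Lemma \ref{Lsim}, the block decomposition $K^{[n-t]}=K\otimes K^{[n-t-1]}$, the add/subtract manipulation of the two resulting equations, and the identities $1+\rho=\sqrt{2}$, $1-\rho=\sqrt{2}\rho$ to close the inductive step via Lemma \ref{nextD}. The only difference is presentational --- the paper carries out the computation explicitly for $t=0\to 1$ and then appeals to ``similarly'' and induction, whereas you write the general inductive step directly.
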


\begin{proof}
The statement of the theorem is true for $W_{<0>}=W$ from Lemma \ref{Lsim}.
Thus $WK^{[n]}=W.$ Since $W=(W[0]W[1])$ (recall that this is the
concatenation of $W[0]$\ and $W[1])$ and%
\begin{equation*}
K^{[n]}=K\otimes K^{[n-1]}=\frac{1}{\sqrt{2}}%
\begin{bmatrix}
K^{[n-1]} & K^{[n-1]} \\ 
K^{[n-1]} & -K^{[n-1]}%
\end{bmatrix}%
\end{equation*}%
we have 
\begin{eqnarray*}
\frac{1}{\sqrt{2}}(W[0]+W[1])K^{[n-1]} &=&W[0] \\
\frac{1}{\sqrt{2}}(W[0]-W[1])K^{[n-1]} &=&W[1].
\end{eqnarray*}%
By adding and then subtracting these equations, we obtain 
\begin{eqnarray*}
W[0]K^{[n-1]} &=&\frac{1}{\sqrt{2}}\left( W[0]+W[1]\right)  \\
W[1]K^{[n-1]} &=&\frac{1}{\sqrt{2}}\left( W[0]-W[1]\right) .
\end{eqnarray*}%
Now it is easy to check that 
\begin{eqnarray*}
\left( W[0]+\left( \sqrt{2}-1\right) W[1]\right) K^{[n-1]} &=&\frac{1}{\sqrt{%
2}}\left( W[0]+W[1]\right) +\frac{\sqrt{2}-1}{\sqrt{2}}\left(
W[0]-W[1]\right)  \\
&=&W[0]+\frac{2-\sqrt{2}}{\sqrt{2}}W[1] \\
&=&W[0]+\left( \sqrt{2}-1\right) W[1].
\end{eqnarray*}%
Hence, $W_{<1>}=$ $W_{<0>}[0]+\rho W_{<0>}[1]=W[0]+\left( \sqrt{2}-1\right)
W[1]$ is in the eigenspace of $1$ of $K^{[n-1]}.$ \newline
From this, we proof similarly that $W_{<2>}=$ $W_{<1>}[0]+\rho W_{<1>}[1]$
is in the eigenspace of $1$ of $K^{[n-2]}.$ \newline
Using Lemma \ref{nextD}, we complete the proof by induction on $t.$
\end{proof}

\section{A Balance Principle for Self-Dual Codes}

\begin{theorem}
Let $t$ be a selected coordinate position of $C,$ let $\delta \in F=\{0,1\},$
and let $A_{k,\delta }$ be the cardinality of the set 
\begin{equation*}
\left\{ v\in C~|~wt(v)=k\text{ and }v[t]=\delta \right\} .
\end{equation*}%
Then $\sum_{k=0}^{n}A_{k,1}\rho ^{k-1}=\sum_{k=0}^{n}A_{k,0}\rho ^{k+1}$ and 
$\sum_{k=0}^{n}A_{k,1}\rho ^{k-1}=\frac{1+\rho }{4}W_{<n>}.$ Particularly, $%
\sum_{k=0}^{n}A_{k,1}\rho ^{k-1}$ does not depend on the selection of $t.$
\end{theorem}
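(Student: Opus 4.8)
The plan is to reinterpret the two sums in the statement as the two coordinates of the next-to-last derivative $W_{<n-1>}$, which has length $2$, and then to read off both assertions from the eigenvector condition of Theorem \ref{ThEigen} together with the recursion of Lemma \ref{nextD}. Since permuting the coordinates of a self-dual code produces a self-dual code with the same weight distribution (hence the same scalar $W_{<n>}$), I would first reduce to the case in which the selected position $t$ is the last coordinate; the derivative in Definition \ref{defDer} sums over the first coordinates and leaves the last one free, so this is precisely the position that $W_{<n-1>}$ isolates. Writing $a=W_{<n-1>}[0]$ and $b=W_{<n-1>}[1]$ and expanding Definition \ref{defDer}, a codeword $u0$ has weight $wt(u)$ while a codeword $u1$ has weight $wt(u)+1$, so that
\begin{equation*}
a = \sum_{k=0}^{n} A_{k,0}\,\rho^{k}, \qquad b = \sum_{k=0}^{n} A_{k,1}\,\rho^{k-1}.
\end{equation*}
These are exactly the quantities appearing in the theorem.

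Next I would invoke Theorem \ref{ThEigen} with $t=n-1$: the row vector $(a,b)=W_{<n-1>}$ lies in the eigenspace of $1$ of $K^{[1]}=K$, so $(a,b)K=(a,b)$. The first component of this identity reads $\tfrac{1}{\sqrt2}(a+b)=a$, which gives $b=(\sqrt2-1)a=\rho a$. Multiplying $b=\rho a$ by the factor already present turns it into $\sum_k A_{k,1}\rho^{k-1}=\rho\sum_k A_{k,0}\rho^{k}=\sum_k A_{k,0}\rho^{k+1}$, the first balance equation. For the second, Lemma \ref{nextD} with $t=n-1$ gives $W_{<n>}=a+\rho b$; substituting $a=\rho^{-1}b$ and using $1+\rho^{2}=2(1-\rho)$ yields $W_{<n>}=b(\rho^{-1}+\rho)=2b(1-\rho)/\rho$, hence $b=\rho\,W_{<n>}/(2(1-\rho))$. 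It then remains to verify the scalar identity $\rho/(2(1-\rho))=(1+\rho)/4$, which is immediate from the minimal polynomial of $\rho=\sqrt2-1$: cross-multiplying reduces it to $2\rho=(1-\rho)(1+\rho)=1-\rho^{2}$, i.e. $\rho^{2}+2\rho-1=0$. This yields $b=\tfrac{1+\rho}{4}W_{<n>}$, the second assertion.

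Finally, the independence of the selection of $t$ is a corollary rather than a separate computation: $W_{<n>}=\sum_{k}A_{k}\rho^{k}$ depends only on the weight distribution of $C$ and on no coordinate, so the closed form $b=\tfrac{1+\rho}{4}W_{<n>}$ forces $\sum_k A_{k,1}\rho^{k-1}$ to be the same for every coordinate. The step I expect to require the most care is the reduction to a single distinguished coordinate, since the derivative apparatus of Definition \ref{defDer} and of Lemmas \ref{nextD} and \ref{2half} is written for the leading coordinates. I must justify cleanly that applying it after a coordinate permutation is legitimate: the permuted code is still self-dual, Theorem \ref{ThEigen} therefore applies to it, and the permutation leaves both $W_{<n>}$ and the counts $A_{k,\delta}$ (relative to the moved coordinate) unchanged. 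Everything after that reduction is the two-dimensional eigenvector equation together with the single algebraic identity above.
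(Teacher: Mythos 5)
Your proposal is correct and follows essentially the same route as the paper: both identify $\sum_k A_{k,0}\rho^{k}$ and $\sum_k A_{k,1}\rho^{k-1}$ with the two components of $W_{<n-1>}$, derive the first equality from the eigenvector condition $W_{<n-1>}K=W_{<n-1>}$ of Theorem \ref{ThEigen}, obtain the second from Lemma \ref{nextD} via an equivalent algebraic identity for $\rho$, and dispose of general $t$ by a coordinate permutation. The only difference is cosmetic (you permute first, the paper permutes last, and you write the scalar constant as $\rho/(2(1-\rho))$ rather than $\rho/(1+\rho^{2})$).
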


\begin{proof}
Let $t=n.$ From Theorem 6, $W_{<n-1>}=(W_{<n-1>}(0)W_{<n-1>}(1))$ is in the
eigenspace of $1$ of $K.$ which is generated by vector $(1,\rho ).$
Therefore, $W_{<n-1>}(1)=\rho W_{<n-1>}(0).$ Since 
\begin{eqnarray*}
W_{<n-1>}(1) &=&\sum_{u\in F^{n-1}}\rho
^{wt(u)}W[u(1)]=\sum_{k=0}^{n}A_{k,1}\rho ^{k-1} \\
W_{<n-1>}(0) &=&\sum_{u\in F^{n-1}}\rho
^{wt(u)}W[u(0)]=\sum_{k=0}^{n}A_{k,0}\rho ^{k}
\end{eqnarray*}%
we obtain the first equality of the Theorem. From Lemma 5, $%
W_{<n>}=W_{<n-1>}(0)+\rho W_{<n-1>}(1).$ Thus $W_{<n>}=W_{<n-1>}(0)+\rho
^{2}W_{<n-1>}(0).$ Using Lemma 4, we obtain $\sum_{k=0}^{n}A_{k}\rho
^{k}=(1+\rho ^{2})W_{<n-1>}(0)$ and $W_{<n-1>}(0)=\frac{1}{1+\rho ^{2}}%
\sum_{k=0}^{n}A_{k}\rho ^{k}.$ Then $W_{<n-1>}(1)=\frac{\rho }{1+\rho ^{2}}%
\sum_{k=0}^{n}A_{k}\rho ^{k}.$ Since $\frac{\rho }{1+\rho ^{2}}=\frac{1+\rho 
}{4},$ we obtain the second equality of the Theorem. \newline
If $t\neq n,$ we apply a permutation to the coordinates of $C$ which sends $%
t $ to $n.$ Since equivalent codes have the same weight distribution, the
statement of the Theorem holds for $t\neq n.$
\end{proof}

\section{Examples}

\subsection{Binary Self-Dual Codes of Length 8}

There are two binary self-dual codes of length 8 \cite{Pless1972}. They have
weight enumerators \newline
$W_{a}=(1,0,0,0,14,0,0,0,1)$ and $W_{b}=(1,0,4,0,6,0,4,0,1).$ Here the entry 
$A_{k}$ in position $1+k$ is equal to the number of code vectors of weight $%
k,$ $k=0,1,\ldots ,8.$

Each of these weight enumerator satisfies the equation $XM=2^{4}X$ where $M$
is the transpose of the Krawtchouk matrix $K_{8}$ \cite{Sim}. The entry of $%
K_{n}$ in row $i$ and column $j$ is 
\begin{equation*}
\sum_{m=0}^{i}(-1)^{m}\binom{j}{m}\binom{n-j}{i-m}.
\end{equation*}

Using computer algebra system Magma \cite{Mag}, we determine all solutions
of this equation with nonnegative integer entries. They are:\newline
$(1,0,0,0,14,0,0,0,1),(1,0,1,0,12,0,1,0,1),\newline
(1,0,20,10,0,20,1),(1,0,30,8,0,30,1),\newline
(1,0,4,0,6,0,40,1),(1,0,5,0,4,0,5,0,1),\newline
(1,0,6,0,20,6,0,1),(1,0,7,0,0,0,7,0,1).$

Let \ $A_{2,0}=y$ for a self-dual binary code realizing some of these
solutions. Then we can express the possible nonzero values of $A_{k,0}$ and $%
A_{k,1}$ with $y.$ We have

\begin{tabular}{|c|c|c|c|c|c|}
\hline
$k$ & $0$ & $2$ & $4$ & $6$ & $8$ \\ \hline
$A_{k,0}$ & $1$ & $y$ & $\frac{1}{2}A_{4}$ & $A_{2}-y$ & $0$ \\ \hline
$A_{k,1}$ & $0$ & $A_{2}-y$ & $\frac{1}{2}A_{4}$ & $y$ & $1$ \\ \hline
\end{tabular}

For each of the eight possible weight enumerators we solve for $y$ the
balance equation from Theorem 7. The found values of $y$ are $0,$ $\frac{3}{4%
},$ $\frac{3}{2},$ $\frac{9}{4},$ $3,$ $\frac{15}{4},$ $\frac{9}{2},$ $\frac{%
21}{4}.$ Since $y$ must be a nonnegative integer, this result eliminates all
possible weight enumerators except for $W_{a}$ and $W_{b}.$

\subsection{The Golay Code of Length 24}

This well known code is the only binary self-dual code of length 24 with
minimum weight 8 \cite{Pless1968}. The nonzero entries of the weight
distribution of this code are $A_{0}=1,$ $A_{8}=759,$ $A_{12}=2576,$ $%
A_{16}=759,$ $A_{24}=1.$ The supports of all code vectors of weight 8, 12,
and 16 form 5-designs \cite{AssMatt}. Hence we can compute all values $%
A_{k,0}$ and $A_{k,1.}$The result is given in the table\newline
\begin{tabular}{|c|c|c|c|c|c|}
\hline
$k$ & $0$ & $8$ & $12$ & $16$ & $24$ \\ \hline
$A_{k,0}$ & $1$ & $506$ & $\frac{2576}{2}$ & $759-506$ & $0$ \\ \hline
$A_{k,1}$ & $0$ & $759-506$ & $\frac{2576}{2}$ & $506$ & $1$ \\ \hline
\end{tabular}%
\newline
These values satisfy the balance equation of Theorem 7.

Using the three 5-design supported by the code wards of weight 8, 12, and
16, we compute the derivative $W_{<19>}$ and check that it satisfies the
equation $W_{<19>}K^{[5]}=$ $W_{<19>}.$ Its entries are given below:

-1167936*p + 483776, 1202240*p - 497984,

1202240*p - 497984, -1180608*p + 489024,

1202240*p - 497984, -1180608*p + 489024,

-1180608*p + 489024, 1023936*p - 424128,

1202240*p - 497984, -1180608*p + 489024,

-1180608*p + 489024, 1023936*p - 424128,

-1180608*p + 489024, 1023936*p - 424128,

1023936*p - 424128, -750144*p + 310720,

1202240*p - 497984, -1180608*p + 489024,

-1180608*p + 489024, 1023936*p - 424128,

-1180608*p + 489024, 1023936*p - 424128,

1023936*p - 424128, -750144*p + 310720,

-1180608*p + 489024, 1023936*p - 424128,

1023936*p - 424128, -750144*p + 310720,

1023936*p - 424128, -750144*p + 310720,

-750144*p + 310720, 7081024*p - 2933056

\subsection{The Extended Quadratic Residue Code of Length 48}

This is the only [48,24,12] binary self-dual code \cite{Houg}. The supports
of all code vectors of weight 12, 16, 20, 24, 28, 32, and 36 form 5-designs 
\cite{AssMatt}. This allows us to compute the derivative $W_{<43>}$ and to
check that $W_{<43>}K^{[5]}=$ $W_{<43>}.$ The entries of $W_{<43>}$ are

-1801295637184512*p + 746121082765312,

1799773991403520*p - 745490796445696,

1799773991403520*p - 745490796445696,

-1787591301070848*p + 740444560883712,

1799773991403520*p - 745490796445696,

-1787591301070848*p + 740444560883712,

-1787591301070848*p + 740444560883712,

1749670572392448*p - 724737280770048,

1799773991403520*p - 745490796445696,

-1787591301070848*p + 740444560883712,

-1787591301070848*p + 740444560883712,

1749670572392448*p - 724737280770048,

-1787591301070848*p + 740444560883712,

1749670572392448*p - 724737280770048,

1749670572392448*p - 724737280770048,

-1666630947373056*p + 690341141872640,

1799773991403520*p - 745490796445696,

-1787591301070848*p + 740444560883712,

-1787591301070848*p + 740444560883712,

1749670572392448*p - 724737280770048,

-1787591301070848*p + 740444560883712,

1749670572392448*p - 724737280770048,

1749670572392448*p - 724737280770048,

-1666630947373056*p + 690341141872640,

-1787591301070848*p + 740444560883712,

1749670572392448*p - 724737280770048,

1749670572392448*p - 724737280770048,

-1666630947373056*p + 690341141872640,

1749670572392448*p - 724737280770048,

-1666630947373056*p + 690341141872640,

-1666630947373056*p + 690341141872640,

11704454583615488*p - 4848143828713472

\subsection{The Putative [72,36,16] Binary Self-Dual Code}

The existence of this code is a long standing open question \cite{Sloane}.
Its weight enumerator is known:

[\TEXTsymbol{<}0,1\TEXTsymbol{>},\TEXTsymbol{<}16,249849\TEXTsymbol{>},%
\TEXTsymbol{<}20,18106704\TEXTsymbol{>},\TEXTsymbol{<}24,462962955%
\TEXTsymbol{>},\TEXTsymbol{<}28,4397342400\TEXTsymbol{>},\TEXTsymbol{<}%
32,16602715899\TEXTsymbol{>},

\TEXTsymbol{<}36,25756721120\TEXTsymbol{>},\TEXTsymbol{<}40,16602715899%
\TEXTsymbol{>},\TEXTsymbol{<}44,4397342400\TEXTsymbol{>},\TEXTsymbol{<}%
48,462962955\TEXTsymbol{>},\TEXTsymbol{<}52,18106704\TEXTsymbol{>},

\TEXTsymbol{<}56,249849\TEXTsymbol{>},\TEXTsymbol{<}72,1\TEXTsymbol{>}].

The supports all code vectors of weight 16, 20, 24, 28, 32, 36, 40, 44, 48,
52, and 56 form 5-designs \cite{AssMatt}. Using this designs, we compute the
derivative $W_{<67>}$ and check that $W_{<67>}K^{[5]}=$ $W_{<67>}.$ The
entries of $W_{<67>}$ are

-2766052144192871265730560*p + 1145736312355867181711360,

2764814255733869297795072*p - 1145223562167443435552768,

2764814255733869297795072*p - 1145223562167443435552768,

-2759776102730699957600256*p + 1143136690864219955920896,

2764814255733869297795072*p - 1145223562167443435552768,

-2759776102730699957600256*p + 1143136690864219955920896,

-2759776102730699957600256*p + 1143136690864219955920896,

2745563299524396139413504*p - 1137249555016829104029696,

2764814255733869297795072*p - 1145223562167443435552768,

-2759776102730699957600256*p + 1143136690864219955920896,

-2759776102730699957600256*p + 1143136690864219955920896,

2745563299524396139413504*p - 1137249555016829104029696,

-2759776102730699957600256*p + 1143136690864219955920896,

2745563299524396139413504*p - 1137249555016829104029696,

2745563299524396139413504*p - 1137249555016829104029696,

-2713300183161139309314048*p + 1123885734654746797539328,

2764814255733869297795072*p - 1145223562167443435552768,

-2759776102730699957600256*p + 1143136690864219955920896,

-2759776102730699957600256*p + 1143136690864219955920896,

2745563299524396139413504*p - 1137249555016829104029696,

-2759776102730699957600256*p + 1143136690864219955920896,

2745563299524396139413504*p - 1137249555016829104029696,

2745563299524396139413504*p - 1137249555016829104029696,

-2713300183161139309314048*p + 1123885734654746797539328,

-2759776102730699957600256*p + 1143136690864219955920896,

2745563299524396139413504*p - 1137249555016829104029696,

2745563299524396139413504*p - 1137249555016829104029696,

-2713300183161139309314048*p + 1123885734654746797539328,

2745563299524396139413504*p - 1137249555016829104029696,

-2713300183161139309314048*p + 1123885734654746797539328,

-2713300183161139309314048*p + 1123885734654746797539328,

18297763639587213294960640*p - 7579181860614308846632960

The existence of this code remains an open question. If a code exists, there
must be a chain of derivatives $\ W_{<67>},\cdots ,W_{<0>}$ with $W_{<0>}$
having only $0$ and $1$ entries. Lemma \ref{nextD}, Lemma \ref{2half}, and
Theorem \ref{ThEigen} may be usful tools in a search for such chain.

\begin{acknowledgement}
This copy of computer algebra system Magma \cite{Mag} used in this work has
been made available through a generous initiative of the Simons Foundation
covering U.S. colleges, universities, nonprofit research entities, and their
students, faculty, and staff.
\end{acknowledgement}

\section{References}

\end{document}